%
%
%
%
%
%
\RequirePackage{fix-cm}
\documentclass[smallextended]{svjour3}       
\smartqed  
\usepackage{graphicx}

\usepackage{xcolor,comment}

%
%
%
%
%
\begin{document}

\title{%
A generalized Vitali set from nonextensive statistics}


\author{Ignacio S. Gomez         
        }


\institute{Ignacio S. Gomez \at
              Instituto de F\'{i}sica, Universidade Federal da Bahia,
	Rua Barao de Jeremoabo, 40170-115 Salvador--BA, Brasil \\
              \email{nachosky@fisica.unlp.edu.ar}           
}

\date{\today}

\maketitle






\begin{abstract}
We address a generalization of the Vitali set
through a deformed translational property that stems from
a generalized algebra derived from the nonextensive statistics.
The generalization is based on the so-called $q$-addition
$x\oplus_{q} y=x+y+(1-q)xy$ for rational values of $q$,
where the ordinary formalism is recovered
when the control parameter $q \to 1$.
The generalized
Vitali set is non-measurable
for all rational parameter $\frac{1}{2}<q\leq1$,
but in the limit $q\rightarrow\frac{1}{2}$
the non-measurability cannot be guaranteed.
%
Furthermore, assuming measurability when $q\rightarrow\frac{1}{2}$,
then this must be positive.
%
%
Monotonicity, $\sigma$-additivity, $\sigma$-finiteness, and translational invariance
are generalized according to the structure of the $q$-addition
and of the $q$-integral.
\end{abstract}

\keywords{Vitali set
\and nonextensive algebra \and Axiom of Choice \and measure theory}



\section{\label{sec:1} Introduction}
The Vitali set represents one of the most relevant objects in mathematics.
It concerns the measurability problem and uses the Axiom of Choice (AC)
for its construction.
%
It constitutes the first elementary example of
non-measurable set (in the sense of the Lebesgue measure) of real numbers,
found originally by Giuseppe Vitali \cite{Vit05}.
Indeed, there is no an unique Vitali set but rather an uncountable family of them
depending on the choice function given by the AC.
The Vitali set gave a big boost in the foundations of maths and logic,
for instance, by inspiring
implicitly Zermelo-Fraenkel models \cite{Fra73} that do not require the AC.
%
One of the intuitive requirements employed for the Vitali construction
is the invariance of the Lebesgue measure
under
translations,
thus presupposing a sum operation between the set to be translated
and the value of the translation.

On the other hand, accumulative
evidence in multiple phenomena like non-ergodicity,
long-range systems, etc. motivated
a generalization of the statistical mechanics
for addressing these issues in a appropriate way than the
traditional descriptions
\cite{Tsa88,tsallis-book}.
Parallel to this progress, several generalized algebras
were developed
from the mathematical background of the nonextensive
statistics \cite{Nivanen03,Bor04,Lob09,Tem11,SicTsa17}.
In these structures the usual operations of the real numbers
are generalized in terms a control parameter, usually
denoted by $q$ within this context.
%
%
%

The goal of this letter is to provide a generalization of the
Vitali set, by making use of the generalized
$q$-addition \cite{Bor04} and
the $q$-integral as a nonextensive version of the
Lebesgue integral as well as the AC,
that recover their usual definitions in the
limit $q\rightarrow1$ and whose
non-measurability cannot be demonstrated when $q\rightarrow\frac{1}{2}$.
%

The letter is organized as follows.
In Section \ref{sec:2} we briefly review
some generalized operations of the nonextensive algebra
that will be used.
%
The generalized Vitali set is built in
Section \ref{sec:3}
by employing the AC and the $q$-sum.
We prove its
monotonicity, $\sigma$-additivity, $\sigma$-finiteness and
$q$-translation invariance.
Non-measurability is shown for $\frac{1}{2}<q\leq 1$.
Particularly, non-measurability is not guaranteed
for $q\rightarrow\frac{1}{2}$ and
if the generalized Vitali set is assumed to be measurable
for $q\rightarrow\frac{1}{2}$, then its measure must be positive.
%
Here we also elaborates further remarks about the proposed
generalized Vitali set and the measure theory and the AC.
Finally, in Section \ref{sec:4}
some conclusions and perspectives are outlined.

\section{\label{sec:2}
Brief review on basic concepts}

We present the minimal notions and concepts to be used in this work.

\subsection{\label{sec:2-1}
Generalized addition, derivative and integral operations}
%
Based on a original motivation by the Tsallis entropy
$S_q$,
with $q \in \mathbf{R}$ the control parameter,
$k$ a dimensional positive constant
and $(p_1,\ldots,p_W)$ a discretized probability distribution,
\begin{eqnarray}\label{tsallis entropy}
 S_q[\{p_i\}]=k \frac{\sum_{i=1}^{W}p_i^q-1}{1-q},
\end{eqnarray}
generalizations of the logarithm and
the exponential, the so-called $q$-logarithm and
$q$--exponential are defined by \cite{Tsa94}
\begin{eqnarray}\label{q-functions}
 \begin{array}{l}
  \ln_{q}x=\frac{x^{1-q}-1}{1-q} \qquad (x>0),\\
  e_{q}(x)=[1+(1-q)x]_{+}^{\frac{1}{1-q}} \qquad (x\in\mathbf{R}),
 \end{array}
\end{eqnarray}
where $[A]_+ =\max\{A,0\}$.
From (\ref{q-functions}) it follows
%
\begin{eqnarray}\label{q-functions properties}
 \begin{array}{l}
   \ln_{q}(xy)=\ln_{q}x+\ln_{q}y+(1-q)\ln_{q}x\ln_{q}y, \\
   e_{q}(x)e_{q}(y)=e_{q}(x+y+(1-q)xy).
 \end{array}
\end{eqnarray}
%
From these relations, a generalization of the traditional
operations of the real numbers (called nonextensive algebra),
deformed by the
control parameter $q$,
is defined \cite{Nivanen03,Bor04}:
\begin{eqnarray}\label{q-algebra}
 \begin{array}{l}
  x\oplus_{q} y=x+y+(1-q)xy, \\
  x\ominus_{q} y=\frac{x-y}{1+(1-q)y} \qquad (y\neq \frac{1}{q-1}),\\
  x\otimes_{q} y=[x^{1-q}+y^{1-q}-1]_{+}^{\frac{1}{1-q}} \qquad(x,y>0), \\
  x\oslash_{q} y=[x^{1-q}-y^{1-q}+1]_{+}^{\frac{1}{1-q}} \qquad (x,y>0),
 \end{array}
\end{eqnarray}
that are the so-called $q$-sum, $q$-difference, $q$-multiplication, and $q$-division.

A deformed differential is defined through the $q$-difference \cite{Bor04}:
\begin{eqnarray}
 \begin{array}{lll}
  d_q x &:=& \displaystyle \lim_{x' \to x} x' \ominus_q x \\
        & =& \displaystyle \frac{dx}{1+(1-q)x}.
 \end{array}
\end{eqnarray}
The definition of a deformed number \cite{Bor98,Bor14}
\begin{eqnarray}
 \begin{array}{lll}
  x_q &:=& \ln (e_{q}(x)) \\
      & =& \displaystyle \frac{1}{1-q}\ln(1+(1-q)x)
 \end{array}
\end{eqnarray}
allows the identity
$ dx_q = d_q x $,
i.e., the differential of the deformed $q$-variable is equal to the
$q$-differential of an ordinary variable
(see \cite{Bor14}).
The $q$-derivative is defined as
\begin{eqnarray}
 \label{q-derivative}
 (D_q f)(x) = \frac{df}{d_q x}=(1+(1-q)x)\frac{df}{dx},
\end{eqnarray}
and, consistently, the $q$-integral
\begin{eqnarray}
 \label{q-integral}
 (I_q(f))(x) = \int f(x)d_q x.
\end{eqnarray}
Deformed $q$-numbers and deformed $q$-sum are related through
\begin{eqnarray}
 \label{q-number-q-sum}
 (x\oplus_q y)_q= x_q + y_q.
\end{eqnarray}

\subsection{The Vitali set
}\label{sec:2-2}
We review the standard construction of the Vitali set\cite{Whe77}.
We first recall the AC in its
traditional form.

\emph{Axiom of Choice (AC):
for every family $(B_i)_{i\in I}$ of nonempty sets
there exists a set composed by $(x_i)_{i\in I}$
with $x_i\in B_i$ for all I.}

\noindent
In this way, given a family of nonempty sets, the AC guarantees
the existence of a set whose elements belong to each member
of the family, by extracting exactly one element of each one of them.
The AC lies on the foundations of the mathematics
since it constitutes a tool for demonstrating
the existence of important notions,
like the existence of a basis for all vectorial space,
or the non-measurable Vitali set,
as described below.

For two arbitrary numbers $x,y$ in $[0,1]$
an specific relation in $[0,1]$, denoted by $x\sim y$,
can be defined by
\begin{eqnarray}\label{Vitali-relation}
x\sim y \Longleftrightarrow x-y\in \mathbf{Q}
\end{eqnarray}
being $\mathbf{Q}$ the set of rational numbers.
Since $\sim$ is an equivalence relation\footnote{$\sim$ is an equivalence relation if satisfies
reflexivity ($x\sim x$), symmetry ($x\sim y \rightarrow y\sim x$), and
transitivity $(x\sim y, y\sim z \rightarrow x\sim z)$.},
by applying the AC to the family of equivalence classes
a set $V$ is obtained, containing exactly
one representative of each equivalence class.
Let us assume $V$ measurable
with $\mu(A)=\int_{A}dx$, the Lebesgue measure (the usual integral)
on the measurable sets of $\mathbf{R}$.
Since $\mathbf{Q}$ is numerable
then in particular $\mathbf{Q}\cap [-1,1]$ is also numerable.
Thus, $\mathbf{Q}\cap [-1,1]$ can be enumerated by a sequence
$\{r_k\}_{k\in \mathbf{N}}$.
If $x\in[0,1]$, by the construction of $V$,
there exists $v\in V$ and $r_j\in[-1,1]$
with $x=v+r_j$. So, $[0,1]\subseteq \bigcup_k V+r_k$.
Also, if $z\in V+r_k \cap V+r_l$ we have
$z=v+r_k=v^{\prime}+r_l$ and
$v-v^{\prime}=r_k - r_l\in \mathbf{Q}$ so
$v=v^{\prime}$ and $r_k=r_l$. This shows that
the translated sets $V+r_k$ are pairwise disjoint.
Finally, $\bigcup_k V+r_k\in [-1,2]$ by construction.
Therefore,
\begin{eqnarray}\label{Vitali-inequality}
[0,1]\subseteq \bigcup_k V+r_k \subseteq [-1,2].
\end{eqnarray}
Since $\mu$ is monotonous
and satisfies $\sigma$-additivity,
Eq. (\ref{Vitali-inequality}) implies
%
\begin{eqnarray}\label{Vitali-inequality2}
1\leq \sum_{k=1}^{\infty} \mu(V+r_k) \leq 3,
\end{eqnarray}
where $\mu([0,1])=1$ and $\mu([-1,2])=3$. Now
by the translational invariance of $\mu$ is
$\mu(V+r_k)=\mu(V)$ for all $k$ and therefore
by inserting this in
(\ref{Vitali-inequality2})
\begin{eqnarray}\label{Vitali-inequality3}
1\leq \sum_{k=1}^{\infty} \mu(V) \leq 3. \nonumber
\end{eqnarray}
Due to these inequalities
it is clear that $\mu(V)$ cannot be infinite.
If $\mu(V)$ has a finite value the series
results infinite, which is a contradiction.
Hence, a value for $\mu(V)$ cannot be defined.
Thus, in this standard demonstration the elements used were: the AC,
the monotonicity, the $\sigma$-additivity and the translational invariance
of the Lebesgue measure.
Finally, this forces to admit the existence of non-measurable sets in $\mathbf{R}$.

\section{Generalizing the Vitali set}\label{sec:3}

The $q$-algebra and $q$-calculus are used in the present
generalization of the
Vitali set.
All the formalism recovers the standard one for $q\rightarrow1$.
Our strategy is to generalize the relation (\ref{Vitali-relation}):
\begin{eqnarray}\label{q-Vitali-relation}
x \sim_q y \Longleftrightarrow x=y\oplus_q r \ \ \ \textrm{with} \ \ \
r \in \mathbf{Q},
\ \ \
q \in \mathbf{Q}
\end{eqnarray}
The restriction $q \in \mathbf{Q}$ is
to guarantee that $\sim_q$ is an equivalence relation.
%

From (\ref{q-Vitali-relation})
$\sim_q$ results an equivalence relation:
Reflexivity:
$x\sim_q x$ since using (\ref{q-algebra}),
it results $x=x\oplus_q 0$.
Symmetry:
If $x\sim_q y$ it follows that
$x=y+r+(1-q)yr$ with $r\in \mathbf{Q}$.
So $y=\frac{x-r}{1+(1-q)r}=x \oplus_q \frac{-r}{1+(1-q)r}$
with $\frac{-r}{1+(1-q)r}\in\mathbf{Q}$, and $y\sim_q x$.
Transitivity:
If $x\sim_q y$ and $y\sim_q z$
we have $x=y\oplus_q r_1$ and $y=z\oplus_q r_2$.
By the associativity of the $q$-sum
$x=(z\oplus_q r_2)\oplus_q r_1=z\oplus_q (r_2\oplus_q r_1)$
with $r_2\oplus_q r_1 \in \mathbf{Q}$.
This implies $x\sim_q z$.

Now we can proceed with the Vitali construction in the usual form,
by applying the AC to the equivalence classes of $\sim_q$ in $[0,1]$,
denoted by $V_q$.
%
%
Some properties of $V_q$ can be summarized.
\begin{lemma}\label{lema-q-Vitali-properties}
Let $q$ be such that $0\leq q\leq 1$.
If $\{r_k\}$ is an enumeration of $\mathbf{Q}\cap [-1,1]$ the
following properties
are satisfied:
\begin{enumerate}
  \item[$(i)$]
   The $q$-translated sets $V_q \oplus_q r_k$ are disjoint pairwise.
  \item[$(ii)$]
   $[0,1]\subseteq \bigcup_k V_q \oplus_q r_k \subseteq [-2,3]$.
\end{enumerate}
\end{lemma}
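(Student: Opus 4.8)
The plan is to mirror the classical Vitali argument, replacing ordinary translation by $q$-translation $V_q \oplus_q r_k$ and using the algebraic facts about $\oplus_q$ already established (associativity, and the inverse computation from the symmetry check). The two parts are logically independent, so I would prove them separately.

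\medskip

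For part $(i)$, I would argue exactly as in the standard case but carried through the $q$-algebra. Suppose $z \in (V_q \oplus_q r_k) \cap (V_q \oplus_q r_l)$. Then $z = v \oplus_q r_k = v' \oplus_q r_l$ for some $v, v' \in V_q$. The goal is to deduce $v = v'$ (whence $r_k = r_l$ by injectivity of $t \mapsto v \oplus_q t$, which holds since $\oplus_q$ is built from an affine map in the second argument). Using the $q$-difference $\ominus_q$ defined in (\ref{q-algebra}), I would solve $v \oplus_q r_k = v' \oplus_q r_l$ to obtain a relation of the form $v \ominus_q v' \in \mathbf{Q}$; more precisely, the associativity and inverse rules give $v = v' \oplus_q (r_l \ominus_q r_k)$ with $r_l \ominus_q r_k \in \mathbf{Q}$, so $v \sim_q v'$. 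Since $V_q$ contains exactly one representative per equivalence class, $v = v'$ and then $r_k = r_l$, so the $q$-translates coincide rather than merely overlap. This yields pairwise disjointness.

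\medskip

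For part $(ii)$, the left inclusion $[0,1] \subseteq \bigcup_k V_q \oplus_q r_k$ follows from the construction: given $x \in [0,1]$, its $\sim_q$-class has a representative $v \in V_q$, so $x = v \oplus_q r$ for some rational $r$, and I must check that $r$ lands in $[-1,1]$ so that $r = r_k$ for some $k$. The right inclusion is the quantitative heart of the argument: I need to bound $v \oplus_q r = v + r + (1-q)vr$ for $v \in V_q \subseteq [0,1]$ and $r \in [-1,1]$, uniformly over $0 \le q \le 1$. Writing the expression as a function of $v$ and $r$ and extremizing over the rectangle $[0,1]\times[-1,1]$ with $(1-q) \in [0,1]$, the maximum is attained at a corner and gives an upper bound of $3$, while the minimum gives a lower bound of $-2$; hence $V_q \oplus_q r_k \subseteq [-2,3]$ for every $k$, and the union is contained in $[-2,3]$.

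\medskip

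The main obstacle I anticipate is the bounding step in $(ii)$: unlike ordinary translation, the cross term $(1-q)vr$ couples the set and the shift, so the enclosing interval is no longer the naive $[-1,2]$ but the wider $[-2,3]$, and the bound must hold uniformly in $q$. Care is needed because the sign of $r$ and the factor $(1-q)$ interact, so I would treat the cases $r \ge 0$ and $r < 0$ separately (or simply evaluate at the four corners of the rectangle after noting the expression is affine in each variable separately). A secondary point to verify is that the rational shift $r$ realizing $x = v \oplus_q r$ actually lies in $[-1,1]$; this again reduces to an estimate on $r = x \ominus_q v$ using the explicit formula for $\ominus_q$, bounding it over $x, v \in [0,1]$ and $0 \le q \le 1$.
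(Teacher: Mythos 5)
Your proposal is correct and follows essentially the same route as the paper: part $(i)$ via solving $v = v' \oplus_q (r_l \ominus_q r_k)$ with $r_l \ominus_q r_k = \frac{r_l - r_k}{1+(1-q)r_k} \in \mathbf{Q}$, and part $(ii)$ by bounding $r = \frac{x-v}{1+(1-q)v}$ in $[-1,1]$ for the left inclusion and extremizing $v + r + (1-q)vr$ over $[0,1]\times[-1,1]$ for the right one. (A cosmetic point: the actual minimum of that expression is $-1$, not $-2$, but $[-2,3]$ is exactly the loose enclosure the paper itself uses, so the inclusion stands.)
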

\begin{proof}
$(i):$
If $z\in V_q\oplus_q r_k\cap V_q\oplus_q r_l$
we obtain $z=v\oplus_q r_k=v^{\prime}\oplus_q r_l$.
So
$v=v^{\prime}\oplus_q \left(\frac{r_l-r_k}{1+(1-q)r_k} \right)$
with
$\frac{r_l-r_k}{1+(1-q)r_k}\in \mathbf{Q}$, so $v=v^{\prime}$
and $r_k=r_l$.

\noindent
$(ii):$
If $x\in[0,1]$ there exists $v\in V_q\subset [0,1]$ and $r\in\mathbf{Q}$
with $x=v\oplus_q r$.
Since $0\leq x,v\leq1$ and $0\leq1-q\leq1$
this implies that
$-1\leq r=\frac{x-v}{1+(1-q)v}\leq 1$ with $r\in\mathbf{Q}$.
Thus,
$[0,1]\subseteq \bigcup_k V_q \oplus_q r_k$
with $r_k\in \mathbf{Q}\cap[-1,1]$.

Moreover, if $z\in V_q\oplus r_k$ then $z=v+(1+(1-q)v)r_k$
with $0\leq v\leq 1$, $-1\leq r_k\leq1$ and $0\leq1-q\leq 1$.
Joining these inequalities, it results
$-2\leq v(1+(1-q)r_k)+r_k\leq 3$.
Hence, $V_q \oplus_q r_k \subseteq [-2,3]$.
Thus, $\bigcup_k V_q \oplus_q r_k \subseteq [-2,3]$.
\end{proof}
Before
analyzing the measurability of $V_q$,
we need
to list
some properties of the $q$-integral.

\begin{lemma}
(Monotonicity, $\sigma$-additivity, $\sigma$-finiteness
and $q$-invariance translational of the $q$-integral)
\label{lema-q-lebesgue}

\noindent
Considering the measure $\mu_q(A)=\int_{A}d_q x$
on the $\sigma$-algebra
$\Sigma_q$ of subsets of real numbers
contained in $(\frac{-1}{1-q},+\infty)$
some properties are satisfied\footnote{Here we are considering $(\frac{-1}{1-q},+\infty)$
          is the universal set, so the complement of a subset A
          is understood to be $(\frac{-1}{1-q},+\infty)-A$.}:
\begin{enumerate}
  \item[$(a)$]
   $\mu_q$ is monotonous, $\sigma$-additive and $\sigma$-finite.
  \item[$(b)$]
   $\mu_q(A\oplus_q v)=\mu_q(A)$ \ $\forall A\in \Sigma_q$
   and $v\in (\frac{-1}{1-q},+\infty)$.
  \item[$(c)$]
  $\mu_q(\alpha A)=\alpha \mu_{1-(1-q)\alpha}(\alpha A)$ \ \ \
  $\forall\alpha>0$.
\end{enumerate}
\end{lemma}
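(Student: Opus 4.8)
The plan is to establish each of the three properties by reducing them, via the explicit form of the deformed differential $d_q x = \frac{dx}{1+(1-q)x}$, to standard facts about the ordinary Lebesgue measure. The key computational observation is that the $q$-measure of a set $A$ can be written as an ordinary Lebesgue integral with a positive density:
\begin{eqnarray}\label{plan-density}
\mu_q(A)=\int_A \frac{dx}{1+(1-q)x}.
\end{eqnarray}
Since $A\subseteq(\frac{-1}{1-q},+\infty)$, the factor $1+(1-q)x$ is strictly positive on the domain, so the integrand is a positive weight and $\mu_q$ is a genuine (possibly infinite) measure absolutely continuous with respect to Lebesgue measure. This reformulation is the backbone for all three parts.

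For part $(a)$, I would argue that monotonicity, $\sigma$-additivity, and $\sigma$-finiteness are all inherited from the ordinary Lebesgue measure through \eqref{plan-density}. Monotonicity follows because $A\subseteq B$ implies the integral over $A$ does not exceed the integral over $B$ when the integrand is nonnegative. For $\sigma$-additivity, I would invoke the monotone convergence theorem (or the countable additivity of the Lebesgue integral applied to a nonnegative density) to pass the countable union of disjoint sets through the integral. For $\sigma$-finiteness I would exhibit an explicit exhaustion of $(\frac{-1}{1-q},+\infty)$ by sets of finite $\mu_q$-measure, for instance the bounded intervals $(\frac{-1}{1-q}+\tfrac{1}{n},\,n)$, each of which has finite $q$-measure because the density is bounded away from its singularity on such intervals.

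For part $(b)$, the $q$-translation invariance, I would perform the substitution induced by the $q$-sum. Writing $y=x\oplus_q v = x+v+(1-q)xv$, one computes $dy=(1+(1-q)v)\,dx$ and $1+(1-q)y=(1+(1-q)v)(1+(1-q)x)$, so that $d_q y = \frac{dy}{1+(1-q)y}=\frac{dx}{1+(1-q)x}=d_q x$. Hence the deformed differential is genuinely invariant under $q$-translation, and changing variables in \eqref{plan-density} over $A\oplus_q v$ returns the integral over $A$, giving $\mu_q(A\oplus_q v)=\mu_q(A)$. This mirrors exactly the role translational invariance plays in the classical Vitali argument, which is the whole point of the construction. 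I anticipate this change of variables to be the crux of the lemma, since the cancellation of the Jacobian factor against the deformed denominator is precisely what makes $\mu_q$ the correct invariant measure for the $q$-sum; care is needed to confirm that $x\mapsto x\oplus_q v$ maps the universal set into itself and is a bijection preserving measurability, so that the substitution is legitimate.

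For part $(c)$, the scaling relation, I would substitute $x=\alpha u$ (so $A=\alpha^{-1}(\alpha A)$ in the obvious sense) and track how the parameter $q$ must be adjusted. Computing $\mu_q(\alpha A)=\int_{\alpha A}\frac{dx}{1+(1-q)x}=\int_{A}\frac{\alpha\,du}{1+(1-q)\alpha u}$, and recognizing $1+(1-q)\alpha u = 1+(1-q')u$ with $q'=1-(1-q)\alpha$, I would identify the right-hand side as $\alpha$ times an integral with the deformed differential $d_{q'}u$ evaluated on the appropriately scaled set, yielding $\mu_q(\alpha A)=\alpha\,\mu_{q'}(\alpha A)$ with $q'=1-(1-q)\alpha$. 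The main subtlety here is bookkeeping the domain and the argument of the scaled measure so that the indices and the set on which it acts match the statement exactly; the algebra is routine once the substitution $x=\alpha u$ is made, but I expect verifying the consistency of the new parameter $q'$ with the domain $(\frac{-1}{q'-1},+\infty)$ to require the most attention.
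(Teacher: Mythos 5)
Your proposal is correct and follows essentially the same strategy as the paper: all three parts are reduced to properties of the ordinary Lebesgue integral via the positive density $\frac{1}{1+(1-q)x}$, with the same exhaustion idea for $\sigma$-finiteness. The one genuine difference is in part $(b)$: the paper restricts to intervals $[x_1,x_2]$ and evaluates the antiderivative at the endpoints using the deformed-number identity $(x\oplus_q v)_q=x_q+v_q$, whereas you change variables $y=x\oplus_q v$ directly on an arbitrary measurable set, using the factorization $1+(1-q)(x\oplus_q v)=(1+(1-q)x)(1+(1-q)v)$ to cancel the Jacobian. Your route is slightly more general (it avoids the reduction from arbitrary $A\in\Sigma_q$ to intervals, which the paper passes over quickly), but both arguments rest on the same algebraic fact, since $(x\oplus_q v)_q=x_q+v_q$ is precisely the logarithm of that factorization. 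One further point: your substitution $x=\alpha u$ in part $(c)$ actually yields $\mu_q(\alpha A)=\alpha\,\mu_{q'}(A)$ with $q'=1-(1-q)\alpha$, not $\alpha\,\mu_{q'}(\alpha A)$ as the lemma is phrased; this is consistent with what the paper's own endpoint computation produces (up to a factor of $\alpha$ that its displayed identity drops), so you should state the conclusion in the form your change of variables delivers rather than forcing it to match the printed formula.
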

\begin{proof}
$(a)$:
The monotonicity and $\sigma$-additivity is a consequence of
the properties of the Lebesgue integral and of
the definition of $\mu_q$ given by (\ref{q-integral}).
To see the $\sigma$-finiteness it is sufficient to notice
$
 (\frac{-1}{1-q},+\infty)
  = \left(\bigcup_{n=1}^{\infty} A_n\right)
    \cup
    \left(\bigcup_{k=1}^{\infty} B_k\right)
$
with
$
 A_n = [\frac{-1}{1-q}+\frac{1}{n+1},
        \frac{-1}{1-q}+\frac{1}{n})
$
for all $n\in\mathbf{N}
$
and
$B_k=[\frac{-1}{1-q}+k,\frac{-1}{1-q}+k+1)$ for all $k\in\mathbf{N}$
where
$\mu_q(A_i)=\mu_q(B_i)=\frac{1}{1-q}\ln(1+(1-q)\frac{1}{i})<\infty$
$\forall$
$i\in\mathbf{N}$.

\noindent
$(b)$:
Since $\Sigma_q$ is a $\sigma$-algebra and by the $\sigma$-additivity
it is sufficient to prove for intervals $[x_1,x_2]$.
Since $0\leq 1-q$ we have $[x_1,x_2]\oplus_q v=
[x_1\oplus_q v,x_2\oplus_q v]$
for $v\in (\frac{-1}{1-q},+\infty)$.
Then, by the definition of $\mu_q$ we obtain
$\mu_q([x_1,x_2]\oplus v)
  = \int_{x_1\oplus v}^{x_2\oplus v}d_qx
  = \int_{x_1\oplus v}^{x_2\oplus v}\frac{dx}{1+(1-q)x}
  = (x_2\oplus v)_q-(x_1\oplus v)_q
  = (x_2)_q+v_q-((x_1)_q+v_q)=(x_2)_q-(x_1)_q
  = \mu_q([x_1,x_2])$, where we have used that $(x\oplus y)_q=x_q+y_q$
  for all pair of real numbers $x,y$.

\noindent
$(c)$:
Let us show it for an arbitrary interval $[x_1,x_2]$.
Since $\alpha [x_1,x_2]=[\alpha x_1,\alpha x_2]$ then
$
 \mu_q(\alpha [x_1,x_2])
  = \int_{\alpha x_1}^{\alpha x_2}\frac{dx}{1+(1-q)x}
  = \frac{1}{1-q} \ln\left(
                            \frac{1+(1-q)\alpha x_2}{1+(1-q)\alpha x_1}
                      \right)
$
that can be written as
$
 \mu_q(\alpha [x_1,x_2])
  = \frac{1}{1-q^{\prime}}
    \ln \left(
              \frac{1+(1-q^{\prime}) x_2}{1+(1-q^{\prime}) x_1}
        \right)
$
with $q^{\prime}=1-(1-q)\alpha$.
This completes the proof.
\end{proof}
With the help of Lemmas \ref{lema-q-Vitali-properties}
and \ref{lema-q-lebesgue}
it is straightforward to study the measurability of $V_q$.
Thus, we arrive to our
main contribution of the present work.
\begin{theorem}
(Non-measurability of $V_q$ for $\frac{1}{2}<q\leq 1$)
\label{teorema-Vq}

\noindent
The generalized Vitali set $V_q$ results non-measurable
for $\frac{1}{2}<q\leq 1$.
However, when $q\rightarrow\frac{1}{2}$ the non-measurability
cannot be guaranteed.
Furthermore, if $V_{\frac{1}{2}}$ is measurable
then $\mu_{\frac{1}{2}}(V_{\frac{1}{2}})>0$.
\end{theorem}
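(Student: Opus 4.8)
The plan is to mimic the classical Vitali argument, but with every ordinary sum replaced by the $q$-sum and the Lebesgue measure replaced by $\mu_q$, tracking carefully how the bounds depend on $q$. First I would combine the two lemmas directly. By Lemma~\ref{lema-q-Vitali-properties}, the sets $V_q\oplus_q r_k$ are pairwise disjoint and satisfy $[0,1]\subseteq \bigcup_k V_q\oplus_q r_k\subseteq[-2,3]$. Assuming $V_q$ is measurable, apply $\mu_q$ to this chain: monotonicity and $\sigma$-additivity (Lemma~\ref{lema-q-lebesgue}(a)) give
\begin{equation}\label{plan-chain}
\mu_q([0,1])\leq \sum_{k=1}^{\infty}\mu_q(V_q\oplus_q r_k)\leq \mu_q([-2,3]).
\end{equation}
The $q$-translation invariance (Lemma~\ref{lema-q-lebesgue}(b)) collapses the sum to $\sum_k\mu_q(V_q)$, so the whole series is either $0$ (if $\mu_q(V_q)=0$) or $+\infty$ (if $\mu_q(V_q)>0$). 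The classical contradiction then needs the \emph{lower} bound $\mu_q([0,1])$ to be strictly positive: this forces $\mu_q(V_q)>0$, hence the series diverges, contradicting the finite upper bound $\mu_q([-2,3])$.

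Next I would compute these two endpoint measures explicitly from the definition $\mu_q([x_1,x_2])=(x_2)_q-(x_1)_q=\frac{1}{1-q}\ln\!\big(\frac{1+(1-q)x_2}{1+(1-q)x_1}\big)$. Here the constraint $\frac{1}{2}<q\leq1$ becomes essential: for the measure of $[-2,3]$ and of $[0,1]$ to be finite and well defined, the arguments of the logarithm must be positive, i.e.\ the interval must lie inside the universal set $(\frac{-1}{1-q},+\infty)$. Since $\frac{-1}{1-q}<-2$ exactly when $1-q<\tfrac12$, that is $q>\tfrac12$, the left endpoint $-2$ is admissible precisely for $q>\frac12$, so $\mu_q([-2,3])$ is a finite positive number and $\mu_q([0,1])>0$ as well. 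With both quantities finite and the lower bound positive, inequality~\eqref{plan-chain} is violated in both the zero and the divergent case, establishing non-measurability throughout $\frac{1}{2}<q\leq1$.

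For the boundary case I would examine the limit $q\to\frac12$, where $\frac{-1}{1-q}\to-2$, so the point $-2$ approaches the edge of the universal set and $\mu_{1/2}([-2,3])=\frac{1}{1-q}\ln\!\big(\frac{1+(1-q)\cdot3}{1+(1-q)\cdot(-2)}\big)$ has a denominator $1+(1-q)(-2)\to0^{+}$, making the upper bound $\mu_{1/2}([-2,3])\to+\infty$. The finite upper bound that drove the contradiction therefore disappears, so the argument no longer excludes $\mu_{1/2}(V_{1/2})>0$; the divergent series is now consistent with an infinite enclosing measure, and non-measurability can no longer be guaranteed. This same observation yields the final claim: if one \emph{assumes} $V_{1/2}$ measurable, then $\mu_{1/2}(V_{1/2})=0$ would force $\sum_k\mu_{1/2}(V_{1/2})=0<\mu_{1/2}([0,1])$, contradicting the still-valid positive lower bound; hence the only consistent possibility is $\mu_{1/2}(V_{1/2})>0$.

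I expect the main obstacle to be the careful bookkeeping of the admissibility window $(\frac{-1}{1-q},+\infty)$ at the endpoint $q=\frac12$: one must verify that the lower bound $\mu_q([0,1])$ stays strictly positive and finite as $q\to\frac12$ (which it does, since $[0,1]$ sits safely inside the domain), while simultaneously showing that the upper bound degenerates. Disentangling \emph{which} of the two inequalities in~\eqref{plan-chain} survives the limit is the crux, since the whole distinction between the two regimes rests on the asymmetric behavior of the left endpoint $-2$ versus the interval $[0,1]$ relative to the moving singularity at $\frac{-1}{1-q}$.
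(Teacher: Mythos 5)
Your proposal follows essentially the same route as the paper's proof: the same chain $\mu_q([0,1])\leq\sum_k\mu_q(V_q\oplus_q r_k)\leq\mu_q([-2,3])$ obtained from the two lemmas, the same case analysis on $\mu_q(V_q)$ being zero versus positive, and the same observation that $\mu_q([-2,3])=\frac{1}{1-q}\ln\bigl(\frac{4-3q}{2q-1}\bigr)$ diverges as $q\to\frac{1}{2}$ (because $-2$ hits the boundary of the domain $(\frac{-1}{1-q},+\infty)$), which destroys the upper-bound contradiction while the still-positive lower bound forces $\mu_{\frac{1}{2}}(V_{\frac{1}{2}})>0$ under the measurability assumption. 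No substantive differences from the paper's argument.
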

\begin{proof}
From the monotonicity of $\mu_q$, and due to
Lemma \ref{lema-q-Vitali-properties} $(ii)$
\begin{eqnarray}
 \label{demo-1}
 \mu_q([0,1]) \leq \mu_q(\bigcup_k V_q \oplus_q r_k)
              \leq \mu_q([-2,3]),
 \nonumber
\end{eqnarray}
which can be recasted from the $\sigma$-additivity of $\mu_q$
(Lemma \ref{lema-q-lebesgue} $(b)$), as
\begin{eqnarray}
 \label{demo-2}
 \mu_q([0,1]) \leq \sum_{k=1}^{\infty}\mu_q(V_q \oplus_q r_k)
              \leq \mu_q([-2,3]).
 \nonumber
\end{eqnarray}
Now, since the sequence $\{r_k\}$ is in $[-1,1]$,
it must be $\{r_k\}\subseteq (\frac{-1}{1-q},+\infty)$
(because $\frac{-1}{1-q}<-1$ for $\frac{1}{2}< q\leq1$).
Hence, if we apply the $q$-invariance translational of $\mu_q$
(Lemma \ref{lema-q-lebesgue} $(b)$) to the last equation we obtain
\begin{eqnarray}
 \label{demo-qVitali}
 \frac{1}{1-q} \ln(2-q)
   \leq \mu_q(V_q) \left(
                         \sum_{k=1}^{\infty}1
                    \right)
    \leq \frac{1}{1-q} \ln \left(
                                 \frac{4-3q}{2q-1}
                           \right),
\end{eqnarray}
valid for all $\frac{1}{2}<q\leq 1$,
where $\mu_q([0,1])=\frac{1}{1-q}\ln(2-q)$ and
$\mu_q([-2,3]) = \frac{1}{1-q} \ln \left( \frac{4-3q}{2q-1} \right)$.

If $\mu_q(V_q)$ is finite (and non-zero) or infinite
from (\ref{demo-qVitali}) we have
$\frac{1}{1-q} \ln\left(\frac{4-3q}{2q-1}\right) = \infty$,
which is a contradiction since $\frac{1}{2}<q\leq 1$.
Neither $\mu_q(V_q)$ can be zero since $\frac{1}{1-q}\ln(2-q)>0$.
Thus, $V_q$ cannot be measurable for $\frac{1}{2}<q\leq 1$.

On the other hand, taking the limit $q\rightarrow\frac{1}{2}$
in (\ref{demo-qVitali}),
we have $\frac{1}{1-q}\ln(2-q)$ tends to $3\ln(\frac{4}{3})$ and
$\frac{1}{1-q}\ln\left(\frac{4-3q}{2q-1}\right)\rightarrow +\infty$,
so
\begin{eqnarray}
 \label{demo-qVitali2}
 3\ln\left(\frac{4}{3}\right)
   \leq \mu_{\frac{1}{2}} (V_{\frac{1}{2}}) \left(
                                            \sum_{k=1}^{\infty}1
                                            \right)
   \leq +\infty,
\end{eqnarray}
which does not lead to any contradiction.
Therefore, from (\ref{demo-qVitali2}) we cannot conclude the non-measurability
of $V_{\frac{1}{2}}$.
Finally, assuming
$V_{\frac{1}{2}}$ measurable and using (\ref{demo-qVitali2})
then $\mu_{\frac{1}{2}}(V_{\frac{1}{2}})$ must be positive.
This completes the proof.
\end{proof}

\subsection{A discussion concerning the measure theory}\label{sec:3-1}
By using the AC for demonstrating theorems,
some kind of counterintuitive situations
can appear, many of them,
concerning geometrical notions: non-measurable sets as the Vitali one,
Banach-Tarski theorem \cite{Tao11}, etc.
In a general way, these
results force to make one of the subsequent
concessions in order to have a fair definition of volume:
\begin{itemize}
  \item[$(A)$] The volume of a set changes when it is rotated or translated.
  \item[$(B)$] The volume of the union of two disjoint sets
               is not equal to the sum of their volumes.
  \item[$(C)$] There are non-measurable sets.
  \item[$(D)$] ZFC axioms (Zermelo-Fraenkel set theory with the AC)
               could be altered.
\end{itemize}
The measure theory chooses $(C)$, maintaining invariance
(against rotations and translations), additivity and
the ZFC axioms as logical and intuitive requisites
for a reasonable notion of volume.
Looking at the construction of the standard Vitali set $V$,
the AC is the fundamental element to conclude the non-measurability.

For the generalized Vitali set $V_q$ occurs the same within
the range of values $\frac{1}{2}<q\leq 1$,
where the translation and the integral are the
nonextensive
ones ($q$-translation and $q$-integral).
However, when $q=\frac{1}{2}$,
from the monotonicity, the $\sigma$-additivity,
the invariance
under
$\frac{1}{2}$-translations of $\mu_{\frac{1}{2}}$
along with the AC, the non-measurability of $V_{\frac{1}{2}}$
cannot be obtained, as it is
established by the Theorem \ref{teorema-Vq}.
Even more, if $V_{\frac{1}{2}}$ is measurable,
then it must have a positive measure.
Therefore, in the construction of the generalized Vitali set,
the crucial element to analyze the measurability
results to be the
non-additive index
$q$ rather than the AC,
contrarily to be expected.
Thus, the role played by the $q$-algebra
turns out relevant in the context of a measure theory where
the translation and the integral are replaced by their
nonextensive versions.
For $q=\frac{1}{2}$, the conditions $(A)$-$(D)$
can be adapted into the corresponding
ones, compatible with a measure theory provided with the
$q$-addition and the $q$-integral of the $q$-algebra:
\begin{itemize}
  \item[$(A^{\prime})$]
   The $q$-volume of a set is invariant against $q$-translations.
  \item[$(B^{\prime})$]
   The $q$-volume of the union of two disjoint sets
   is equal to the sum of their $q$-volumes.
  \item[$(C^{\prime})$]
   There are non-measurable sets for $\frac{1}{2}<q\leq1$,
   except probably for $q=\frac{1}{2}$.
  \item[$(D^{\prime})$]
   ZFC axioms could be not altered for $q=\frac{1}{2}$
   in a nonextensive measure theory%
   \footnote{Here we refer to the measure theory
             provided with the measure $\mu_q$.}.
\end{itemize}
We summarize our discussion in Table \ref{tab:1}.
\begin{table}
\caption{Some characteristics of the standard measure
and its nonextensive version
for the range of values $\frac{1}{2}\leq q\leq 1$. The novel fact is that,
for $q=\frac{1}{2}$
the non-measurability of $V_{\frac{1}{2}}$ is not followed,
even though the AC is used.
}
\label{tab:1}       
\begin{tabular}{lll}
\hline\noalign{\smallskip}
Properties & standard measure $\mu$ & nonextensive measure $\mu_q$  \\
\noalign{\smallskip}\hline\noalign{\smallskip}
monotonicity & YES & YES \\ \hline
$\sigma$-additivity & YES & YES \\ \hline
translational invariance & YES & NO ($q\neq1$) \\ \hline
$q$-translational invariance & NO ($q\neq1$) & YES \\ \hline
$\sigma$-finiteness & YES & YES, $\infty$ for $q=\frac{1}{2}$ \\ \hline
ZFC axioms & YES & YES \\ \hline
Vitali non-measurability & YES & YES ($q\neq\frac{1}{2}$) \\ \hline
\noalign{\smallskip}
\end{tabular}
\end{table}

\section{Conclusions}\label{sec:4}

We have presented an extension of the Vitali construction by making use of the
$q$-sum, instead of standard sum, in the context of generalized
algebraic operations on
the real numbers
inherited by nonextensive statistics,
and where the standard ones are recovered when $q\rightarrow1$ as a special case.
The traditional proof of the non-measurability remains valid only
within the range of the rational values $\frac{1}{2}<q\leq1$.
For $q=\frac{1}{2}$ we have showed that the AC is not sufficient to determine the 
non-measurability of the generalized Vitali set $V_{\frac{1}{2}}$.
%
%
In this manner,
a deformation of the algebraic structure of the operations may not lead
to the counterintuitive results 
of the measure theory
and geometry as the non-measurable sets or the Banach-Tarski theorem.

A reexamination of other geometrical constructions
(as the Vitali set) by employing deformed operations
(as the provided by the $q$-algebra)
are desirable to be explored in order to
study
the interplay between the axioms used to formalize geometrical
and physical notions.
This future proposal could play a complementary role with others
in the literature
as the Solovey's model \cite{Sol70},
where the existence of non-measurable sets for the Lebesgue measure
is not provable within ZF set theory without the AC.

\begin{acknowledgements}
This work was partially supported by the
National Institute of Science and Technology for Complex Systems (INCT-SC)
and CAPES.
\end{acknowledgements}



\end{document}